\newtheorem{proposition}{Proposition}
\newtheorem{proposition?}{Proposition?}
\theoremstyle{definition}
\newtheorem{definition}{Definition}
\newcommand{\integer}{\mathbb Z} 
\newcommand{\half}{\tfrac{1}{2}} 
\newcommand{\hi}{\mathcal{H}} 
\newcommand{\hik}{\mathcal{K}} 
\newcommand{\lh}{\mathcal{L(H)}} 
\newcommand{\sh}{\mathcal{S(H)}} 
\newcommand{\ip}[2]{\left\langle\,#1\,|\,#2\,\right\rangle} 
\newcommand{\kb}[2]{|#1\rangle\langle#2|} 
\newcommand{\tr}[1]{\mathrm{tr}\left[#1\right]} 
\newcommand{\ptr}[2]{\mathrm{tr}_{#1}[#2]} 
\newcommand{\id}{\mathbbm{1}} 
\newcommand{\obsh}{\mathcal{O}(\hi)}
\newcommand{\A}{\mathsf{A}}
\newcommand{\B}{\mathsf{B}}
\newcommand{\G}{\mathsf{G}}
\newcommand{\J}{\mathsf{J}}
\newcommand{\T}{\mathsf{T}}
\newcommand{\I}{\mathcal{I}}
\newcommand{\hout}{\hi'} 
\newcommand{\state}{\mathcal{S}(\hi)}
\newcommand{\stateout}{\mathcal{S}(\hi')}
\newcommand{\stateab}{\mathcal{S}(\hi_\mathcal{A} \otimes \hi_\mathcal{B})}
\newcommand{\statehh}{\mathcal{S}(\hi \otimes \hi)}
\newcommand{\statekh}{\mathcal{S}(\hik \otimes \hi)}
\newcommand{\statekk}{\mathcal{S}(\hik_1 \otimes \hik_2)}
\begin{document}

\title[]{Simultaneous Measurement of Two Quantum Observables: \\
Compatibility, Broadcasting, and In-between}

\author{Teiko Heinosaari}
\email{teiko.heinosaari@utu.fi}
\address{Turku Centre for Quantum Physics, Department of Physics and Astronomy, University of Turku, Finland}

\pacs{03.65.Ta}

\begin{abstract} 
One of the central features of quantum theory is that there are pairs of quantum observables that cannot be measured simultaneously. 
This incompatibility of quantum observables is a necessary ingredient in several quantum phenomena, such as measurement uncertainty relations, violation of Bell inequalities and steering. 
Two quantum observables that admit a simultaneous measurement are, in this respect, classical.
A finer classification of classicality can be made by formulating four symmetric relations on the set of observables that are stronger than compatibility; they are broadcastability, one-side broadcastability, mutual nondisturbance and nondisturbance.
It is proven that the five relations form a hierarchy, and their differences in terms of the required devices needed in a simultaneous measurement is explained.
All the four relations stronger than compatibility are completely characterized in the case of qubit observables. 
\end{abstract}

\maketitle


\section{Introduction}

It is one of the central features of quantum theory that only some pairs of quantum observables can be measured simultaneously. 
There are various ways how two quantum observables may permit a simultaneous measurement.  
Joint measurability, or compatibility, is the general concept related to simultaneous measurements.
Compatibility of two observables does not say anything how those observables can be implemented jointly, just that there is some measurement set-up giving the correct marginal probability distributions.

In contrast, broadcasting of observables is a modification of broadcasting of states, and it is a very specific way to implement simultaneous measurement.
It requires an existence of a broadcasting channel that gives two approximate copies of an arbitrary input state and, even if the copies are not identical to the original state, there is no difference with respect to the target observables.
A broadcastable pair of observables is compatible, but it is compatible in a very strong sense.

These two scenarios raise some immediate questions. 
What is exactly the additional feature that makes some compatible pairs broadcastable, especially from the point of view of implementation of their simultaneous measurement?
How different are these two relations on observables, and are there any intermediate steps between them?
In this paper we tackle these questions.

We will define three relations on observables that are between broadcastability and compatibility; they are weaker than broadcastability but stronger than compatibility.
These relations are \emph{one-side broadcastability}, \emph{mutual nondisturbance} and \emph{nondisturbance}.
All together we then obtain a hierarchy of five relations on quantum observables; see Fig. \ref{fig:hierarchy}.

The hierarchy of relations is useful in several different ways. 
Firstly, it reveals that there are different levels of joint measurability, and in this sense, different layers of classicality.
Secondly, if we can show that some pair of observables is, e.g., not compatible, then we know that all the stronger relations fail as well.
We will demonstrate the usage of this kind of argument, and we will completely characterize all the four relations stronger than compatibility in the case of qubit observables.

To understand the differences of the five relations, we will formulate them in a unifying way. 
We show that the five relations can be understood in the differences of the needed devices in the implementation of a simultaneous measurement.
Using the presented framework we can also demonstrate that a natural generalization of the compatibility relation is, in fact, equivalent to the compatibility.

\section{Compatibility}

In the following $\hi$ is a fixed Hilbert space, either finite or countably infinite dimensional.
We denote by $\sh$ the set of all states, i.e., positive trace class operators of trace 1.
A quantum observable is mathematically defined as a positive operator valued measure (POVM) \cite{PSAQT82}, \cite{OQP97}.
We will restrict our investigation to observables with finite number of outcomes, hence we will understand an observable as a map $\A$ from a finite set of measurement outcomes $\Omega_\A$ to the set of bounded linear operators $\lh$ on $\hi$ such that $\A(x)\geq 0$ and $\sum_x \A(x) = \id$.
For a subset $X\subseteq\Omega_\A$, we denote $\A(X) = \sum_{x\in X} \A(x)$.
The probability of getting an outcome $x$ in a measurement of $\A$ in an initial state $\varrho$ is given by the formula $\tr{\varrho \A(x)}$.

We denote by $\obsh$ the set of all observables $\A$ on $\hi$ with $\Omega_\A \subset \integer$.
A binary relation on $\obsh$ is a subset $\mathcal{R}$ of the Cartesian product $\obsh \times \obsh$.
Hence, a binary relation can be thought as a property that a pair of observables may or may not possess.
The relations that we will study are all symmetric:
if $(\A,\B) \in \mathcal{R}$, then also $(\B,\A) \in \mathcal{R}$.
For this reason, we can talk about properties of $\A$ and $\B$ rather than $(\A,\B)$.
If $\mathcal{R}$ and $\mathcal{R}'$ are two binary relations on $\obsh$ such that  $\mathcal{R} \subset \mathcal{R}'$, then we say that  $\mathcal{R}$ is \emph{stronger} than $\mathcal{R}'$, and that $\mathcal{R}'$ is \emph{weaker} than  $\mathcal{R}$.
The complement of a binary relation $\mathcal{R}$ is the subset of those pairs $(\A,\B) \in \obsh \times \obsh$ that do not belong to $\mathcal{R}$.
The complement relation of a symmetric relation is also symmetric, and the inclusion of two relations is reversed in their complement relations.

\begin{figure}
\begin{center}
\includegraphics[width=6.5cm]{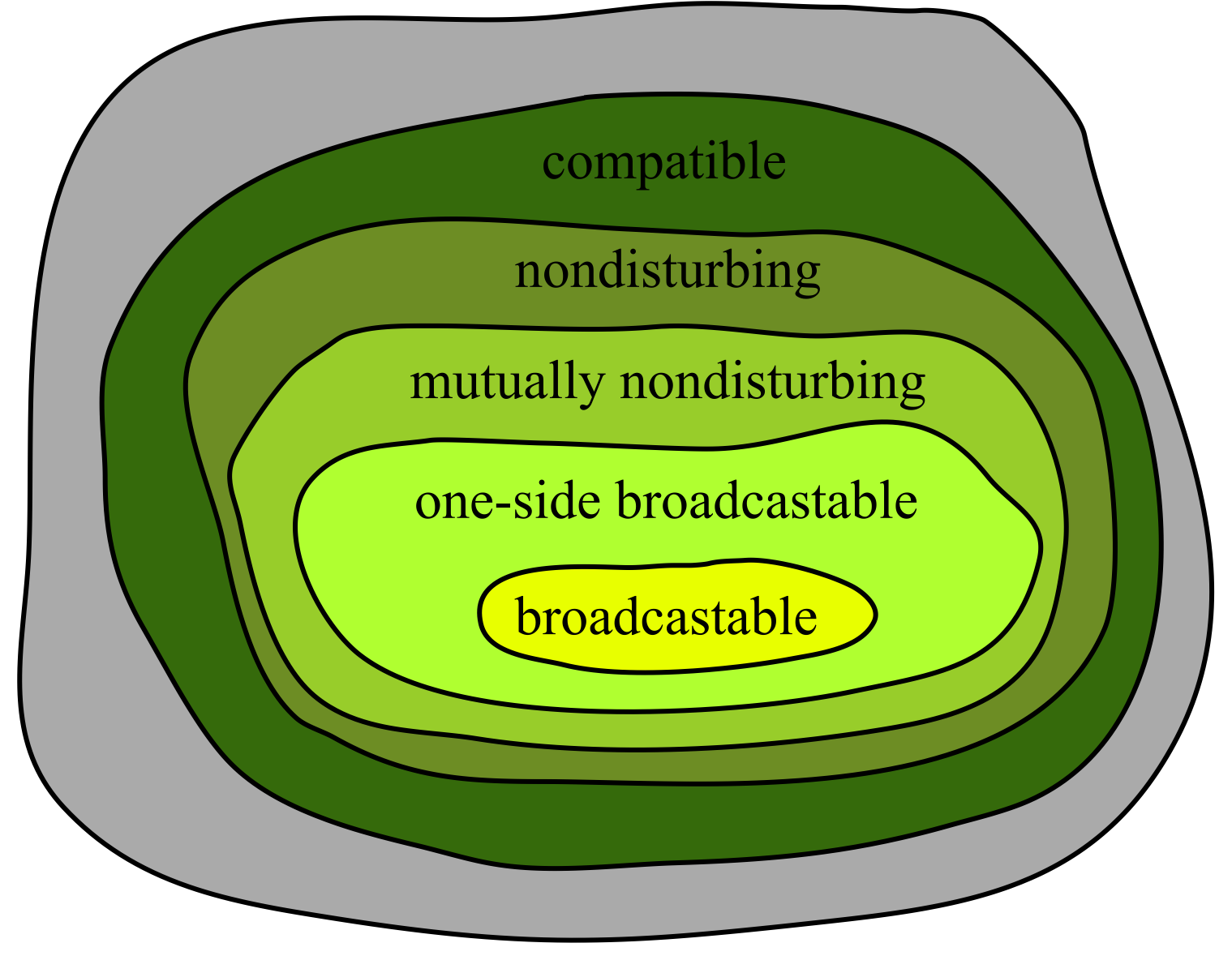}
\end{center}
\caption{\label{fig:hierarchy} The whole area depicts the set of all pairs of quantum observables. The strictest condition for a pair of observables is broadcastability, and the loosest is compatibility. The other three properties are between these two.}
\end{figure}

The most general formulation of simultaneous measurability is based on the concept of a joint observable \cite{Busch87}, \cite{LaPu97}.
A \emph{joint observable} of two observables $\A$ and $\B$ is an observable $\J:\Omega_\A \times \Omega_\B \to \lh$ such that
\begin{align}\label{eq:marginals}
& \J(x,\Omega_\B) = \A(x) \, , 
&  \J(\Omega_\A,y) = \B(y)
\end{align}
for all $x\in\Omega_\A,y\in\Omega_\B$, where we have used the shorthand notation $\J(x,\Omega_\B)=\sum_{y\in\Omega_\B} \J(x,y)$ and $\J(\Omega_\A,y)=\sum_{x\in\Omega_\A} \J(x,y)$.
The existence of a joint observable determines the following symmetric relation on the set of observables.

\begin{definition}
Two observables $\A$ and $\B$ are called \emph{compatible} or \emph{jointly measurable} if they have a joint observable; otherwise they are \emph{incompatible}.
\end{definition}

In the following sections we formulate and study four symmetric relations on the set of observables that are related to simultaneous measurability of two observables, and which are all stronger than compatibility; see Fig. \ref{fig:hierarchy}.
Hence, they correspond to stronger and weaker levels  for two observables to be simultaneously measurable.
Their complement relations refer to the impossibility of simultaneous measurement using specified resources.

\section{Broadcasting and one-side broadcasting}

A \emph{quantum channel} $\Lambda$ is a completely positive linear map from an input state space $\state$ to an output state space $\stateout$.
In the following we will consider quantum channels that take a single system as an input and give two similar systems as outputs, so that $\hi'= \hi_{\mathcal{A}}\otimes\hi_{\mathcal{B}}$ and $\hi_{\mathcal{A}}=\hi_{\mathcal{B}}=\hi$.
These kind of channels are called \emph{broadcasting channels}. 
A broadcasting channel $\Lambda:\state \to \stateab$ \emph{broadcasts a state $\varrho$} if the reduced states of the output state $\Lambda(\varrho)$ coincide with the input state, i.e.,
\begin{align}
& \ptr{\mathcal{B}}{\Lambda(\varrho)} = \varrho \, , \quad \ptr{\mathcal{A}}{\Lambda(\varrho)} = \varrho \, . \label{eq:broad}
\end{align}
A subset $\mathcal{T}$ of states is \emph{broadcastable} if there is a channel $\Lambda$ that broadcasts each state $\varrho$ belonging to $\mathcal{T}$. 
It is known that a subset $\mathcal{T}$ is broadcastable if and only if all the states in $\mathcal{T}$ commute with each other \cite{Barnumetal96}, \cite{Fanetal14}.

The broadcasting conditions in \eqref{eq:broad} for a state $\varrho$ are equivalent to the requirement that the equations
\begin{align}
\tr{\varrho \A(x) }= \tr{\Lambda(\varrho) \A(x)\otimes\id} = \tr{\Lambda(\varrho) \id \otimes\A(x)} \label{eq:broad-tr}
\end{align}
hold for all observables $\A$ and outcomes $x\in\Omega_\A$.
This formulation allows us to change the aim of the broadcasting procedure; we may want to satisfy these equations for all states but only for some chosen observables.
Hence, we arrive to the following definition. 

\begin{definition}\label{def:broadcast}
A channel $\Lambda$ \emph{broadcasts an observable $\A$} if  the condition \eqref{eq:broad-tr} holds for all states $\varrho\in\state$.
A subset $\mathcal{A}$ of observables is \emph{broadcastable} if there is a channel $\Lambda$ that broadcasts every observable $\A\in\mathcal{A}$. 
\end{definition}

The requirement that the equations in \eqref{eq:broad-tr} hold for all states $\varrho\in\state$ is equivalent to the condition 
\begin{align}
\A(x) = \Lambda^*( \A(x)\otimes\id )= \Lambda^*( \id \otimes\A(x) ) \, , 
\end{align}
where $\Lambda^*$ is the dual channel of $\Lambda$.
We will mostly use the Schr\"odinger picture of $\Lambda$ and the condition \eqref{eq:broad-tr} to make the physical content more visible, but the Heisenberg picture $\Lambda^*$ is useful when we write joint observables.

The idea of concentrating in observables rather than states was presented in \cite{FeGaPa06} and further investigated in \cite{FePa07},\cite{AlSaLaSo14}.
In these works the cloning of an observable was identified with the cloning of its mean value, so our definition is slightly different from that.
However, the essential fact that cloning of observables is more related to joint measurement than is cloning of states was observed already in \cite{FeGaPa06}.

Let us then focus on the broadcastability of two observables.
By Def. \ref{def:broadcast}, a channel $\Lambda$ broadcasts two observables $\A$ and $\B$ if
\begin{align}
& \tr{\varrho \A(x) }= \tr{\Lambda(\varrho) \A(x)\otimes\id} = \tr{\Lambda(\varrho) \id \otimes\A(x)} \label{eq:broad-A} \\
& \tr{\varrho \B(y) }= \tr{\Lambda(\varrho) \B(y)\otimes\id} = \tr{\Lambda(\varrho) \id \otimes\B(y)} \label{eq:broad-B}
\end{align}
for all states $\varrho\in\state$ and outcomes $x\in\Omega_\A$, $y\in\Omega_\B$.
We can think the broadcastibility of two observables in the following way.
Two approximate copies are made of an unknown initial state $\varrho$.
One copy is sent to Alice and another one to Bob.
Both Alice and Bob can choose if they want to measure either $\A$ or $\B$ on their respective copies.
The conditions \eqref{eq:broad-A}--\eqref{eq:broad-B} guarantee that the measurement outcome probabilities are the same as in separate measurements of $\A$ and $\B$ on the initial state $\varrho$.

To provide an example of broadcastable pairs of observables, we consider the following special class of  observables.

\begin{definition}
Let $\{\varphi_j\}_{j=1}^d$ be an orthonormal basis.
An observable $\A$ is \emph{diagonal in $\{\varphi_j\}_{j=1}^d$} if 
\begin{equation}\label{eq:commutative}
\A(x) = \sum_{j =1}^d \alpha_j(x) \kb{\varphi_j}{\varphi_j} \, , 
\end{equation}
where $0\leq \alpha_j(x) \leq 1$ and $\sum_x \alpha_j(x)=1$ for all $j=1,\ldots,d$.
\end{definition}

The observable $\A$ defined in \eqref{eq:commutative} is \emph{commutative}, i.e., $\A(x)\A(y)=\A(y)\A(x)$ for all  $x,y\in\Omega_\A$.
If the dimension of $\hi$ is finite, then a commutative observable is diagonal in some orthonormal basis. 
However, if the dimension of $\hi$ is infinite, then not all commutative observables are of the form \eqref{eq:commutative} since a positive operator need not have a pure point spectrum.
We also observe that two observables $\A$ and $\B$ that are diagonal in the same basis are \emph{mutually commuting}, i.e., $\A(x)\B(y)=\B(y)\A(x)$ for all $x\in\Omega_\A,y\in\Omega_\B$.

The following observation is analogous to the fact that a set containing two commuting states is broadcastable.

\begin{proposition}\label{prop:diagonal}
Let $\mathcal{A}$ be a set of observables that are diagonal in the same orthonormal basis $\{\varphi_j\}_{j=1}^d$. 
Then $\mathcal{A}$ is broadcastable.
\end{proposition}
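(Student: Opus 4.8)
The plan is to exhibit an explicit broadcasting channel built directly from the common eigenbasis. Since every observable in $\mathcal{A}$ depends only on how $\varrho$ weights the rank-one projections $\kb{\varphi_j}{\varphi_j}$, the natural candidate is the measure-and-prepare channel that reads out the state in the basis $\{\varphi_j\}_{j=1}^d$ and writes two classical copies of the outcome, namely
\begin{align}
\Lambda(\varrho) = \sum_{j=1}^d \bra{\varphi_j}\varrho\ket{\varphi_j}\, \kb{\varphi_j}{\varphi_j}\otimes\kb{\varphi_j}{\varphi_j} \, . \nonumber
\end{align}

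First I would check that this is a legitimate quantum channel $\Lambda:\state\to\stateab$. Setting $K_j = (\ket{\varphi_j}\otimes\ket{\varphi_j})\bra{\varphi_j}$, one has $\Lambda(\varrho)=\sum_j K_j\varrho K_j^*$, so $\Lambda$ is completely positive, and the identity $\sum_j K_j^* K_j = \sum_j \kb{\varphi_j}{\varphi_j} = \id$ shows that it is trace-preserving. Hence $\Lambda$ is a valid broadcasting channel.

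The second step is to compute the two reduced output states. Tracing out either tensor factor and using $\tr{\kb{\varphi_j}{\varphi_j}}=1$ leaves the diagonal part of $\varrho$ in the basis $\{\varphi_j\}$,
\begin{align}
\ptr{\mathcal{A}}{\Lambda(\varrho)} = \ptr{\mathcal{B}}{\Lambda(\varrho)} = \sum_{j=1}^d \bra{\varphi_j}\varrho\ket{\varphi_j}\, \kb{\varphi_j}{\varphi_j} \, . \nonumber
\end{align}
It then remains to verify the broadcasting identities \eqref{eq:broad-tr}. For any $\A\in\mathcal{A}$ of the form \eqref{eq:commutative}, I would evaluate $\tr{\Lambda(\varrho)\,\A(x)\otimes\id}$ via the reduced state above together with $\bra{\varphi_j}\A(x)\ket{\varphi_k} = \alpha_j(x)\delta_{jk}$, obtaining $\sum_j \alpha_j(x)\bra{\varphi_j}\varrho\ket{\varphi_j}$, which is precisely $\tr{\varrho\A(x)}$. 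By symmetry the same holds for the other marginal, so \eqref{eq:broad-tr} holds for every $\A\in\mathcal{A}$ and every outcome $x$.

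The only conceptual point is the observation that the expectation value of a diagonal observable depends solely on the diagonal entries of the state, and $\Lambda$ is designed exactly to preserve this diagonal in both output copies, so nothing is lost for the target observables even though the off-diagonal coherences of $\varrho$ are destroyed. I do not anticipate a real obstacle: the work is entirely in choosing the right channel, after which every verification is routine. Crucially, the single channel $\Lambda$ depends only on the fixed basis and not on the individual observables, so it broadcasts all of $\mathcal{A}$ at once, regardless of how many observables it contains.
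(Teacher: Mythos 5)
Your proof is correct and uses exactly the channel from the paper's own proof, namely the measure-and-prepare map $\Lambda(\varrho) = \sum_j \ip{\varphi_j}{\varrho\,\varphi_j}\,\kb{\varphi_j\otimes\varphi_j}{\varphi_j\otimes\varphi_j}$; you simply spell out the verification (Kraus form, reduced states, marginal identities) that the paper leaves implicit.
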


\begin{proof}
We define a channel $\Lambda$ as
\begin{equation}
\Lambda(\varrho) = \sum_{j=1}^d \ip{\varphi_j}{\varrho \varphi_j} \kb{\varphi_j \otimes \varphi_j}{\varphi_j \otimes \varphi_j} \, .
\end{equation}
If $\A$ has the form \eqref{eq:commutative}, then
\begin{equation*}
\tr{\varrho \A(x) }= \tr{\Lambda(\varrho) \A(x)\otimes\id} = \tr{\Lambda(\varrho) \id \otimes\A(x)} \, ,
\end{equation*}
hence $\Lambda$ broadcasts $\A$.
\end{proof}

In a finite dimensional Hilbert space a commutative set of selfadjoint operators can be diagonalized in the same orthonormal basis.
The following statement is hence a direct consequence of Prop. \ref{prop:diagonal}.

\begin{proposition}\label{prop:commu}
Let $\dim\hi < \infty$.
A mutually commuting pair of commutative observables is broadcastable.
\end{proposition}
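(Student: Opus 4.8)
The plan is to reduce this statement directly to Proposition~\ref{prop:diagonal} by producing a single orthonormal basis in which both observables are diagonal. First I would gather all the effect operators $\A(x)$ and $\B(y)$ into one family. Each is positive, hence selfadjoint. The hypothesis that $\A$ is commutative gives $\A(x)\A(x')=\A(x')\A(x)$, the hypothesis that $\B$ is commutative gives the analogous relations among the $\B(y)$, and the mutual commutativity gives $\A(x)\B(y)=\B(y)\A(x)$. Taken together these say precisely that $\{\A(x)\}_x \cup \{\B(y)\}_y$ is a commuting set of selfadjoint operators.

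Next I would invoke the fact recalled just above the statement: in a finite-dimensional Hilbert space a commuting set of selfadjoint operators can be simultaneously diagonalized. This yields a common orthonormal basis $\{\varphi_j\}_{j=1}^d$ of $\hi$ in which every $\A(x)$ and every $\B(y)$ is diagonal. Setting $\alpha_j(x)=\ip{\varphi_j}{\A(x)\varphi_j}$ recovers the form \eqref{eq:commutative} for $\A$, and similarly for $\B$. The only remaining check is that these coefficients obey the normalization in the definition of a diagonal observable: positivity of $\A(x)$ gives $\alpha_j(x)\geq 0$, the bound $\A(x)\leq\id$ gives $\alpha_j(x)\leq 1$, and $\sum_x\A(x)=\id$ gives $\sum_x\alpha_j(x)=1$. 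Hence both $\A$ and $\B$ are diagonal in the single basis $\{\varphi_j\}_{j=1}^d$.

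With both observables diagonal in one basis, the pair $\{\A,\B\}$ satisfies the hypothesis of Proposition~\ref{prop:diagonal}, which immediately yields broadcastability. I do not expect a genuine obstacle: the whole content sits in the simultaneous-diagonalization fact, and the normalization verification is routine. The single point worth stating carefully is that commutativity of each observable \emph{together with} their mutual commutativity is exactly what makes the combined family jointly diagonalizable, so that dropping either assumption would break the reduction.
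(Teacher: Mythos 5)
Your proof is correct and takes essentially the same route as the paper, which states Prop.~\ref{prop:commu} as a direct consequence of Prop.~\ref{prop:diagonal} via simultaneous diagonalization of the commuting family of selfadjoint effects $\{\A(x)\}_x \cup \{\B(y)\}_y$ in finite dimension. Your explicit check of the coefficients $\alpha_j(x)$ (positivity, bound by $1$, and normalization) is exactly the routine detail the paper leaves implicit.
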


A relaxation of the broadcasting conditions \eqref{eq:broad-A}--\eqref{eq:broad-B} is that we require only
\begin{align}
& \tr{\varrho \A(x) } = \tr{\Lambda(\varrho) \A(x)\otimes\id} \label{eq:a-ub}\\
& \tr{\varrho \B(y) } = \tr{\Lambda(\varrho) \id \otimes\B(y)} \label{eq:b-ub}
\end{align}
for all states $\varrho\in\state$ and outcomes $x\in\Omega_\A$, $y\in\Omega_\B$.
This still refers to a process where we first make approximate copies of $\varrho$ by using the channel $\Lambda$ and then measure $\A$ and $\B$ on those copies; see Fig. \ref{fig:boxes}a.
The difference to the earlier broadcasting set-up is that now the sides of the measurements are relevant: Alice must measure $\A$ and Bob must measure $\B$ on their respective subsystems.
We are led to the following definition.

\begin{definition}\label{def:one-sided}
Two observables $\A$ and $\B$ are \emph{one-side broadcastable} if there exists a channel $\Lambda:\state \to \statehh$ such that \eqref{eq:a-ub}--\eqref{eq:b-ub} hold for all states $\varrho\in\state$ and outcomes $x\in\Omega_\A$, $y\in\Omega_\B$.
\end{definition}

It is clear that if two observables are broadcastable, then they are one-side broadcastable.
Further, a one-side broadcastable pair is compatible;  if $\A$ and $\B$ are one-side broadcastable with a channel $\Lambda$, then they have a joint observable $\J$ defined as
\begin{align}
\J(x,y) = \Lambda^* (\A(x) \otimes \B(y) ) \, .
\end{align}
The conditions \eqref{eq:a-ub} -- \eqref{eq:b-ub} guarantee that $\J$ is indeed a joint observable.

\begin{figure}
    \centering
    \subfigure[]
    {
         \includegraphics[width=6cm]{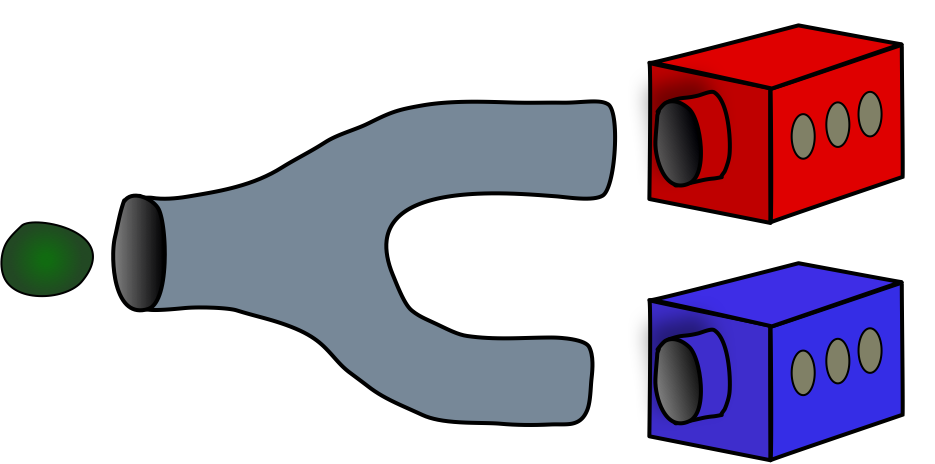}      
         }
    \subfigure[]
    {
        \includegraphics[width=6cm]{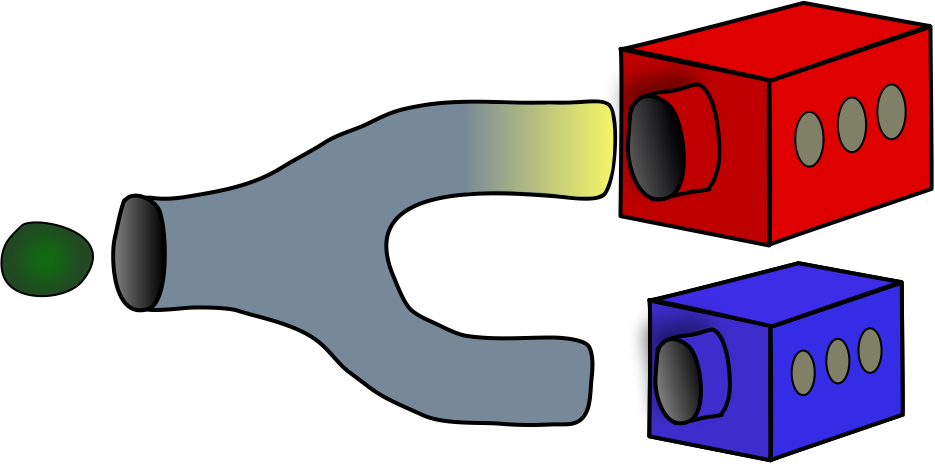} 

    }
    \subfigure[]
     {
        \includegraphics[width=6cm]{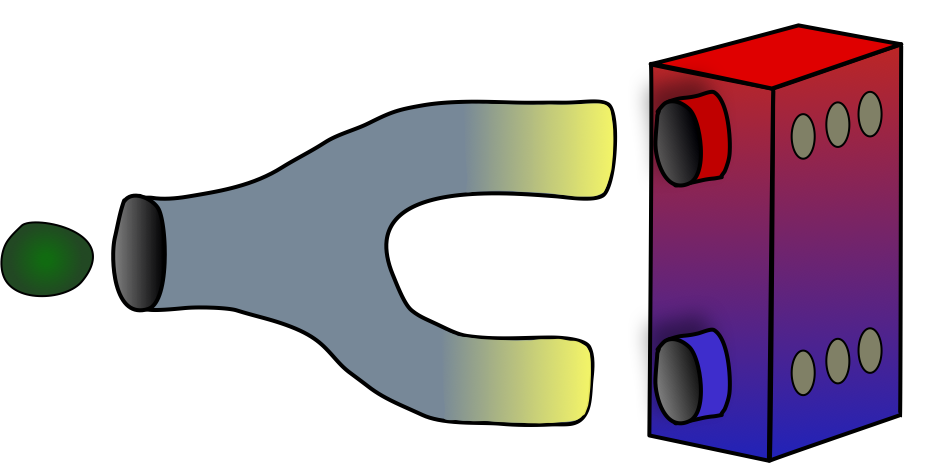} 

    }

        \caption{ \label{fig:boxes}(a) In the one-side broadcasting scenario two approximate copies of the input state are produced and the target observables are measured on these copies. (b) In a nondisturbing sequential measurement one of the measured observables is allowed to be different than the corresponding target observable. The auxiliary observable can operate on a different Hilbert space than the target observable.
        (c) In the most general set-up a global measurement is allowed. Two observables can be obtained in this way exactly when they are compatible.}
\end{figure}

\section{Nondisturbing measurements}

An observable $\A$ can be measured without disturbing another observable $\B$ if the measurement outcome distributions of $\B$ are the same if we measure $\A$ before $\B$ or not measure $\A$ at all.
To formulate this relation in the standard mathematical formalism, we recall the concept of an instrument \cite{QTOS76}.
An instrument which implements a measurement of $\A$ is a map $x\mapsto \I_x$ such that each $\I_x$ is a completely positive linear map and 
\begin{equation}\label{eq:i-repro}
\tr{\I_x(\varrho)} = \tr{\varrho \A(x)}
\end{equation}
for all states $\varrho$ and outcomes $x\in\Omega_\A$.
We will again use the notation $\I_X \equiv \sum_{x\in X} \I_x$ for all subsets $X\subseteq\Omega_\A$.
The nondisturbance condition for an observable $\B$ then reads
\begin{equation}\label{eq:i-nond}
\tr{\varrho \B(y) } =  \tr{\I_\Omega(\varrho) \B(y)}\, , 
\end{equation}
required to hold for all states $\varrho\in\sh$ and outcomes $y\in\Omega_\B$.
We say that an observable \emph{$\A$ can be measured without disturbing $\B$} if there exists an instrument $\I$ such that \eqref{eq:i-repro}--\eqref{eq:i-nond} hold for all states $\varrho\in\sh$ and outcomes $x\in\Omega_\A$, $y\in\Omega_\B$.

If $\A$ can be measured without disturbing $\B$, then $\A$ and $\B$ are compatible.
This is clear since a sequential measurement of $\A$ followed by $\B$ is a joint measurement of $\A$ and $\B$ if the first measurement does not disturb $\B$.
A joint observable $\J$ is defined as $\J(x,y) = \I_x^*(\B(y))$ and the marginal conditions \eqref{eq:marginals} follow from \eqref{eq:i-repro}--\eqref{eq:i-nond}.

To see a connection to the one-side broadcasting, we recall that every instrument can be written in the measurement model form 
\begin{equation*}
\I_x(\varrho) = \ptr{\hik}{U \eta \otimes \varrho U^* \A'(x) \otimes \id} \, , 
\end{equation*}
where $\eta$ is a fixed initial state of an ancillary system $\hik$, $\A'$ is a probe observable on $\hik$ and $U:\hik\otimes\hi \to \hik\otimes\hi$ is a unitary operator describing a measurement interaction \cite{Ozawa84}.
The condition \eqref{eq:i-repro} can then be written as
\begin{equation}
\tr{\varrho \A(x)} = \tr{U \eta \otimes \varrho U^* \A'(x) \otimes \id} \, ,
\end{equation}
and the nondisturbance condition \eqref{eq:i-nond} takes the form
\begin{align}
\tr{\varrho \B(y) } = \tr{ U \eta \otimes \varrho U^* \id \otimes \B(y) } \, .
\end{align}
By denoting $\Lambda(\varrho) = U \eta \otimes \varrho U^*$ we can write these equations as
\begin{align}
& \tr{\varrho \A(x) } = \tr{\Lambda(\varrho) \A'(x)\otimes\id} \label{eq:repro}\\
& \tr{\varrho \B(y) } = \tr{\Lambda(\varrho) \id \otimes\B(y)} \label{eq:nond}
\end{align}
These are exactly the same equations as in one-side broadcasting, except that in the latter case it is required that $\hik=\hi$ and $\A'=\A$.
This difference is illustrated in Fig. \ref{fig:boxes}b.

To see that the previous conditions are equivalent to the existence of a nondisturbing measurement, assume there is a channel $\Lambda:\state \to \statekh$ and an observable $\A'$ on $\hik$ such that  \eqref{eq:repro}--\eqref{eq:nond} hold for all states $\varrho$ and outcomes $x\in\Omega_\A$, $y\in\Omega_\B$.
For each $x\in\Omega_\A$, we then define a map $\I_x$ as
\begin{align}
\I_x(\varrho) = \ptr{\hik}{\sqrt{\A'(x)}\otimes\id \Lambda(\varrho) \sqrt{\A'(x)}\otimes\id} \, .
\end{align}
As $\I_x$ is a composition of completely positive maps, it is completely positive.
A direct calculation shows that $\I$ satisfies \eqref{eq:i-repro}--\eqref{eq:i-nond}.
We summarize the previous discussion in the following proposition.

\begin{proposition}\label{prop:nondist}
An observable $\A$ can be measured without disturbing an observable $\B$ if and only if there exists an ancillary system $\hik$, a probe observable $\A'$ on $\hik$, and a channel $\Lambda:\state \to \statekh$ such that 
\begin{align}
& \tr{\varrho \A(x) } = \tr{\Lambda(\varrho) \A'(x)\otimes\id} \\
& \tr{\varrho \B(y) } = \tr{\Lambda(\varrho) \id \otimes\B(y)}
\end{align}
hold for all  states $\varrho$ and outcomes $x\in\Omega_\A$, $y\in\Omega_\B$.
\end{proposition}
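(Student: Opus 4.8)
The plan is to prove the two implications separately, leaning on the measurement-model (dilation) representation of instruments recalled just above the statement, which turns both directions into a translation between the instrument picture and the channel picture. Much of the work is already contained in that discussion; what remains is to organize it into a clean iff and to check the two trace identities carefully.

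For the \emph{only if} direction, I would start from an instrument $\I$ that measures $\A$ without disturbing $\B$ and invoke the normal measurement dilation \cite{Ozawa84}: there exist an ancilla $\hik$, a fixed probe state $\eta$, a probe observable $\A'$ on $\hik$, and a measurement coupling unitary $U$ on $\hik\otimes\hi$ with $\I_x(\varrho)=\ptr{\hik}{U(\eta\otimes\varrho)U^* \, \A'(x)\otimes\id}$. Setting $\Lambda(\varrho)=U(\eta\otimes\varrho)U^*$ defines a channel from $\state$ to $\statekh$, being the composition of tensoring with the fixed state $\eta$ and conjugation by a unitary, both completely positive and trace preserving. The reproducibility condition \eqref{eq:i-repro} then reads $\tr{\varrho\A(x)}=\tr{\Lambda(\varrho)\,\A'(x)\otimes\id}$, i.e. \eqref{eq:repro}, and applying the nondisturbance condition \eqref{eq:i-nond} to $\I_\Omega$ (using $\sum_x\A'(x)=\id$ to collapse the ancilla operator) gives $\tr{\varrho\B(y)}=\tr{\Lambda(\varrho)\,\id\otimes\B(y)}$, i.e. \eqref{eq:nond}.

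For the \emph{if} direction, given $\Lambda$ and $\A'$ I would define $\I_x(\varrho)=\ptr{\hik}{\sqrt{\A'(x)}\otimes\id \, \Lambda(\varrho)\, \sqrt{\A'(x)}\otimes\id}$. Complete positivity is immediate, since $\I_x$ is the composition of the channel $\Lambda$, the compression by $\sqrt{\A'(x)}\otimes\id$, and the partial trace over $\hik$, each completely positive. To verify \eqref{eq:i-repro} I would use that the partial trace preserves the total trace, then cyclicity together with $(\sqrt{\A'(x)})^2=\A'(x)$, to get $\tr{\I_x(\varrho)}=\tr{\Lambda(\varrho)\,\A'(x)\otimes\id}$, which equals $\tr{\varrho\A(x)}$ by hypothesis \eqref{eq:repro}.

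The one step I would handle most carefully is the nondisturbance identity \eqref{eq:i-nond}. Summing over outcomes, $\I_\Omega(\varrho)=\sum_x\ptr{\hik}{\sqrt{\A'(x)}\otimes\id \, \Lambda(\varrho)\, \sqrt{\A'(x)}\otimes\id}$; pairing against $\B(y)$, I would pass to the total trace on $\hik\otimes\hi$ by absorbing $\B(y)$ as $\id\otimes\B(y)$, use cyclicity to bring the compression factors together as $(\sqrt{\A'(x)}\otimes\id)(\id\otimes\B(y))(\sqrt{\A'(x)}\otimes\id)=\A'(x)\otimes\B(y)$, and finally apply the probe normalization $\sum_x\A'(x)=\id$. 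This collapses the sum to $\tr{\Lambda(\varrho)\,\id\otimes\B(y)}$, which equals $\tr{\varrho\B(y)}$ by hypothesis \eqref{eq:nond}. The only genuine subtlety is confirming that the factor ordering under the partial trace survives the passage to the full trace and that the probe normalization is exactly what renders the auxiliary compression invisible to $\B$; there is no real obstacle here, only bookkeeping.
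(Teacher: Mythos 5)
Your proof is correct and takes essentially the same route as the paper: for the ``only if'' direction the measurement-model dilation of the instrument with $\Lambda(\varrho)=U(\eta\otimes\varrho)U^*$, and for the ``if'' direction the square-root instrument $\I_x(\varrho)=\ptr{\hik}{\sqrt{\A'(x)}\otimes\id\,\Lambda(\varrho)\,\sqrt{\A'(x)}\otimes\id}$. The careful trace bookkeeping you describe (cyclicity, absorbing $\B(y)$ as $\id\otimes\B(y)$, and the probe normalization $\sum_x\A'(x)=\id$) is precisely the ``direct calculation'' the paper leaves to the reader.
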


We are interested on symmetric relations on the set of observables, hence we make the following definitions.

\begin{definition}
Two observables $\A$ and $\B$ are
\begin{itemize}
\item \emph{mutually nondisturbing} if $\A$ can be measured without disturbing $\B$ and $\B$ can be measured without disturbing $\A$.
\item \emph{nondisturbing} if $\A$ can be measured without disturbing $\B$ or $\B$ can be measured without disturbing $\A$.
\end{itemize}
\end{definition}

If $\A$ can be measured without disturbing $\B$, it does \emph{not} imply that $\B$ can be measured without disturbing $\A$.
An example demonstrating this fact was given in \cite{HeWo10}.
We thus conclude that the mutual nondisturbance is a strictly stronger relation than the disturbance relation.
As we noted earlier, nondisturbing observables are compatible.
Further, a comparison of Prop. \ref{prop:nondist} with Def. \ref{def:one-sided} shows that one-side broadcastable observables are mutually nondisturbing.
We have thus reached the hierarchy depicted in  Fig. \ref{fig:hierarchy}.

As a demonstration, let us recall a class of mutually nondisturbing pairs of observables: \emph{two mutually commuting observables $\A$ and $\B$ are mutually nondisturbing} \cite{BuSi98}.
This can be seen by using the L\"uders instruments of $\A$ and $\B$.
The L\"uders instrument of $\A$ is defined as
\begin{equation}
\I_x(\varrho) = \sqrt{\A(x)} \varrho \sqrt{\A(x)} \, .
\end{equation}
It follows from $\A(x)\B(y)=\B(y)\A(x)$ that $\sqrt{\A(x)}\B(y)=\B(y)\sqrt{\A(x)}$.
Hence,
\begin{align*}
\tr{\I_{\Omega}(\varrho) \B(y)} &= \sum_x  \tr{\sqrt{\A(x)} \varrho \sqrt{\A(x)}\B(y)} \\
&= \sum_x  \tr{\varrho \B(y) \A(x)} = \tr{\varrho \B(y)} \, ,
\end{align*}
so that $\A$ can be measured without disturbing $\B$.

\section{Reformulation of compatibility}

All the four relations stronger than compatibility have been formulated as certain requirements on a broadcasting channel and auxiliary observables.
We will now put the compatibility relation into this same framework.

Let us look a relaxation of the nondisturbance as it was formulated in Prop. \ref{prop:nondist}.
We can ask for the existence of two ancillary systems $\hik_1$, $\hik_2$, a channel $\Lambda:\state \to \statekk$, and observables $\A'$ and $\B'$ on systems $\hik_1$ and $\hik_2$, respectively, such that
\begin{align}
& \tr{\varrho \A(x) } = \tr{\Lambda(\varrho) \A'(x)\otimes\id} \label{eq:gen-a}\\
& \tr{\varrho \B(y) } = \tr{\Lambda(\varrho) \id \otimes\B'(y)}  \label{eq:gen-b}
\end{align}
for all states $\varrho$ and outcomes $x\in\Omega_\A$, $y\in\Omega_\B$.
This is a relaxation of the nondisturbance relation as now auxiliary observables are allowed on both sides of the output.
We can go one step further and ask for the existence of a channel $\Lambda:\state\to\stateout$ and observable $\G$ on an arbitrary output space $\hout$ such that
\begin{align}
& \tr{\varrho \A(x) } = \tr{\Lambda(\varrho)  \G(x,\Omega_\B)} \label{eq:global-a}\\
& \tr{\varrho \B(y) } = \tr{\Lambda(\varrho)  \G(\Omega_\A,y) } \label{eq:global-b}
\end{align}
for all states $\varrho$ and outcomes $x\in\Omega_\A$, $y\in\Omega_\B$.
This includes the case when $\hout=\hik_1 \otimes\hik_2$ and $\G$ is a global observable; see Fig. \ref{fig:boxes}c.
Both of the above generalizations are equivalent to the compatibility; this is the content of the following result.

\begin{proposition}\label{prop:comp}
For two observables $\A$ and $\B$, the following are equivalent:
\begin{itemize}
\item[(i)] $\A$ are $\B$ compatible. 
\item[(ii)] There exist ancillary systems $\hik_1$ and $\hik_2$, probe observables $\A'$ on $\hik_1$ and $\B'$ on $\hik_2$, and a channel $\Lambda:\state \to \statekk$ such that \eqref{eq:gen-a}--\eqref{eq:gen-b} hold for all states $\varrho$ and outcomes $x\in\Omega_\A$, $y\in\Omega_\B$.
\item[(iii)] There exists an ancillary system $\hik$, a channel $\Lambda$ and an observable $\G$ such that \eqref{eq:global-a}--\eqref{eq:global-b} hold for all states $\varrho$ and outcomes $x\in\Omega_\A$, $y\in\Omega_\B$.
\end{itemize}
\end{proposition}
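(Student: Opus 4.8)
The plan is to prove the three statements equivalent through the cycle $(i)\Rightarrow(ii)\Rightarrow(iii)\Rightarrow(i)$. The two constructions that carry the whole argument are a measure-and-prepare channel assembled from a joint observable, and the dual channel $\Lambda^*$, which converts an observable on the output space into one on the input space.

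For $(i)\Rightarrow(ii)$ I would start from a joint observable $\J$ of $\A$ and $\B$ and realize it by classical correlations. Taking $\hik_1$ and $\hik_2$ to be registers with orthonormal bases $\{\ket{x}\}_{x\in\Omega_\A}$ and $\{\ket{y}\}_{y\in\Omega_\B}$, I would define
\[
\Lambda(\varrho) = \sum_{x,y} \tr{\varrho\,\J(x,y)}\,\kb{x\otimes y}{x\otimes y} \, .
\]
This is completely positive, and it is trace-preserving because $\sum_{x,y}\J(x,y)=\id$. Choosing the sharp probe observables $\A'(x)=\kb{x}{x}$ and $\B'(y)=\kb{y}{y}$, a direct trace computation that uses the marginal relations $\J(x,\Omega_\B)=\A(x)$ and $\J(\Omega_\A,y)=\B(y)$ yields \eqref{eq:gen-a}--\eqref{eq:gen-b}.

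For $(ii)\Rightarrow(iii)$ I would take $\hout=\hik_1\otimes\hik_2$ and the product observable $\G(x,y)=\A'(x)\otimes\B'(y)$, which is a genuine POVM since its operators are positive and sum to $\id$. Its marginals are $\G(x,\Omega_\B)=\A'(x)\otimes\id$ and $\G(\Omega_\A,y)=\id\otimes\B'(y)$, so \eqref{eq:global-a}--\eqref{eq:global-b} reduce exactly to \eqref{eq:gen-a}--\eqref{eq:gen-b}. For $(iii)\Rightarrow(i)$, given $\Lambda$ and $\G$ I would define $\J(x,y)=\Lambda^*(\G(x,y))$. Positivity is inherited from positivity of $\Lambda^*$ and of $\G(x,y)$, and $\sum_{x,y}\J(x,y)=\Lambda^*(\id)=\id$ by unitality of $\Lambda^*$. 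Rewriting \eqref{eq:global-a} with the dual as $\tr{\varrho\,\A(x)}=\tr{\varrho\,\Lambda^*(\G(x,\Omega_\B))}$ for every state $\varrho$ forces $\A(x)=\Lambda^*(\G(x,\Omega_\B))=\J(x,\Omega_\B)$, and symmetrically $\B(y)=\J(\Omega_\A,y)$; thus $\J$ is a joint observable and $\A,\B$ are compatible.

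None of the three steps is a genuine obstacle, since the entire content lies in writing down the correct channel and observable; the only care required is the routine verification that each constructed object is legitimate (trace preservation of $\Lambda$, the normalization of the product $\G$, and the unitality and positivity of $\Lambda^*$). The conceptually decisive step is $(iii)\Rightarrow(i)$: it says that even allowing an arbitrary global measurement $\G$ after an arbitrary broadcasting channel cannot produce anything beyond compatibility, and the identity $\J=\Lambda^*\circ\G$ is precisely what collapses this most general set-up back to the existence of a joint observable.
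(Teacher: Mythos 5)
Your proposal is correct and follows essentially the same route as the paper: the same cycle (i)$\Rightarrow$(ii)$\Rightarrow$(iii)$\Rightarrow$(i), the same measure-and-prepare channel with sharp register observables for (i)$\Rightarrow$(ii), and the same identification $\J(x,y)=\Lambda^*(\G(x,y))$ for (iii)$\Rightarrow$(i). The only difference is cosmetic: you spell out the product observable $\G(x,y)=\A'(x)\otimes\B'(y)$ and the positivity/unitality checks that the paper leaves implicit.
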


\begin{proof}
We have (iii)$\Rightarrow$(i) as $\J(x,y) = \Lambda^*(\G(x,y))$ defines a joint observable of $\A$ and $\B$.
It is clear that (ii)$\Rightarrow$(iii) since there are less constrains in (iii) than in (ii). 
To see that (i)$\Rightarrow$(ii), assume that $\A$ are $\B$ compatible, so there exists a joint observable $\J$.
We fix Hilbert spaces $\hik_1$ and $\hik_2$ with the dimensions $\# \Omega_\A$ and $\# \Omega_\B$, respectively.
On both of these Hilbert spaces we fix orthonormal bases $\{\varphi_x\}$ and $\{\eta_y\}$, labeled with the elements of $\Omega_\A$ and $\Omega_\B$.
We then define a channel $\Lambda$ as
\begin{equation}
\Lambda(\varrho) = \sum_{x,y} \tr{\varrho \J(x,y)} \kb{\varphi_x \otimes \eta_y}{\varphi_x \otimes \eta_y} \, ,
\end{equation}
and we define the observables $\A'$ and $\B'$ as
\begin{equation}
\A'(x) = \kb{\varphi_x}{\varphi_x} \, , \quad \B'(y) = \kb{\eta_y}{\eta_y} \, .
\end{equation}
With these choices the requirements \eqref{eq:gen-a}--\eqref{eq:gen-b} are satisfied.
\end{proof}

\section{Qualitative differences}\label{sec:qualitative}

The qualitative differences of the two extreme relations, broadcasting and compatibility, to the other relations link to the fundamental theorems of no-broadcasting \cite{Barnumetal96} and no-information-without-disturbance \cite{Busch09}.
In the following we explain these connections, which are both based on the concept of an informationally complete observable.
By definition, a collection $\mathcal{A}$ of observables is \emph{informationally complete} if the measurement data $\{ \tr{\varrho \A(x)} : \A\in\mathcal{A},x\in\Omega_\A\}$ is unique for every state $\varrho\in\state$ \cite{BuLa89}.
Even a single observable can be informationally complete \cite{SiSt92}, and a standard example of such observable is a covariant phase space observable (either in finite or infinite phase space) satisfying certain criterion \cite{KiLaScWe12}.

\subsection{Broadcastability versus other relations}\label{sec:versus}

The no-broadcasting theorem for states implies some immediate limitations on the broadcastability of subsets of observables.
Namely, let $\mathcal{A}$ be an informationally complete set of observables.
The broadcastability of $\mathcal{A}$ would then imply that the reduced states of the bipartite output state $\Lambda(\varrho)$ coincide with the input state $\varrho$.
This cannot hold for all states by the no-broadcasting theorem, so we conclude that \emph{an informationally complete set of observables is not broadcastable}.
In particular, a single informationally complete observable is not broadcastable.

The formulation of the broadcastability relation implies a trivial but significant feature: if two observables $\A$ and $\B$ are broadcastable, then $\A$ is broadcastable with itself. 
Therefore, an informationally complete observable is not broadcastable with any other observable.
In the language of binary relations, this means that informationally complete observables are isolated elements in the broadcasting relation.

The existence of isolated elements, i.e., observables that are not related to any other observable, is a distinctive feature of the broadcasting relation.
Too see this, we observe that \emph{every observable is one-side broadcastable with any trivial observable}.
By a trivial observable we mean an observable for which the measurement outcome probabilities do not depend on the input state at all.
Mathematically, this kind of observable can be written as $\T(x) = t(x) \id$, where $t$ is a probability distribution and $\id$ is the identity operator.
Hence, to prove the claim, fix a state $\eta\in\sh$ and define a channel $\Lambda$ as $\Lambda(\varrho) = \varrho \otimes \eta$.
Let $\A$ be any observable and $\T$ a trivial observable. 
Then 
\begin{align}
&  \tr{\Lambda(\varrho) \A(x)\otimes\id} = \tr{\varrho \A(x) } \\
& \tr{\Lambda(\varrho) \id \otimes\T(y)}  = \tr{\eta \T(y) } = \tr{\varrho \T(y) }
\end{align}
so $\A$ and $\T$ are one-side broadcastable.
Due to the hierarchy of the relations, we conclude that a trivial observable is related to any other observable in all the relations except broadcasting.

\subsection{Compatibility versus other relations}

A specific feature of the compatibility relation is that \emph{every observable is compatible with itself}. 
To see this, let $\A$ be an observable. We define an observable $\J$ on $\Omega_\A\times\Omega_\A$ as $\J(x,y) = \delta_{xy} \A(x)$.
Then 
\begin{align}
\J(x,\Omega_\A) =  \J(\Omega_\A,x) = \A(x)
\end{align}
for all $x \in \Omega_\A$, hence $\J$ is a joint observable of $\A$ and $\A$.
The physical explanation of this feauture is the fact that the measurement outcomes of $\A$ are distinguishable classical states and therefore can be duplicated. 

This reflexivity of the compatibility relation is a qualitative difference to all other four relations: in all of them we can find an observable $\A$ which is not in the given relation with itself. 
Due to the hierarchy of the relations, it is enough to find an observable $\A$ such that $\A$ cannot be measured without disturbing $\A$ itself.
A whole class of these kind of observables consists of informationally complete observables. 
The no-information-without-disturbance theorem states that a measurement that gives information causes necessarily some disturbance. 
Since, by definition, an informationally complete observable $\A$ gives unique probability outcome distribution to all states, we conclude that a measurement of $\A$ necessarily disturbs a subsequent measurement of $\A$.

Another distinctive feature, more important but not as sharply formulated, is the fact that \emph{addition of sufficient amount of noise makes any pair of observables compatible} \cite{BuHeScSt13}, \cite{HeMiZi16}.
By the addition of noise we mean mixing an observable with a trivial observable.
For instance, let us consider two incompatible observables $\A$ and $\B$ and their deformations $\tilde{\A}$ and $\tilde{\B}$, where 
\begin{equation}\label{eq:deformed}
\tilde{\A}(x) = \half \A(x) + \half t_1 (x) \id \, , \quad \tilde{\B}(y) = \half \B(y) + \half t_2 (y) \id
\end{equation}
and $t_1,t_2$ are some probability distributions on $\Omega_\A,\Omega_\B$, respectively.
Then $\tilde{\A}$ and $\tilde{\B}$ are compatible as they have a joint observable
\begin{equation}
\J(x,y) = \half t_2(y) \A(x) + \half t_1(x) \B(y) \, .
\end{equation}
Using the normalizations $\sum_x \A(x)=\sum_y \B(y)=\id$ and $\sum_x t_1(x) = \sum_y t_2(y) =1$ it is straightforward to verify that $\J$ gives $\tilde{\A}$ and $\tilde{\B}$ as its marginals.

In contrast, \emph{addition of white noise does not make an arbitrary pair of observables nondisturbing}.
To see this, suppose that $\A$ is informationally complete.
Then a deformed observable of the form
\begin{equation}\label{eq:deformed-2}
\widetilde{\A}(x) = \lambda \A(x) + (1-\lambda) t(x) \id
\end{equation}
with $0<\lambda \leq 1$ is still informationally complete.
This follows from the fact that an observable is informationally complete if and only if its range spans $\lh$ \cite{SiSt92}, and the deformation in \eqref{eq:deformed-2} does not change the span of the range when $\lambda \neq 0$.
Therefore, our earlier discussion implies that $\widetilde{\A}$ cannot be measured without disturbing itself.

\section{Qubit observables}

As we noted earlier, a nondisturbing pair of observables need not be mutually nondisturbing.
However, if the dimension of the Hilbert space is $2$, then these relations are the same and equivalent to the mutual commutativity.
Namely, the result \cite[Prop. 6]{HeWo10} implies the following:

\begin{proposition}\label{prop:qubit-1}
For two qubit observables $\A$ and $\B$, the following are equivalent:
\begin{itemize}
	\item[(i)] $\A$ and $\B$ are mutually commuting.
	\item[(ii)] $\A$ and $\B$ are mutually nondisturbing.
	\item[(iii)] $\A$ and $\B$ are nondisturbing.
\end{itemize}
\end{proposition}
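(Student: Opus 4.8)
The plan is to prove the cycle (i)$\Rightarrow$(ii)$\Rightarrow$(iii)$\Rightarrow$(i), of which only the last implication carries real content. The implication (i)$\Rightarrow$(ii) is already contained in the discussion preceding the proposition: if $\A(x)\B(y)=\B(y)\A(x)$ for all $x,y$, then the L\"uders instrument of $\A$ reproduces $\A$ and leaves $\B$ undisturbed, and by symmetry the L\"uders instrument of $\B$ leaves $\A$ undisturbed, so the pair is mutually nondisturbing; note that this direction needs no dimensional assumption. The implication (ii)$\Rightarrow$(iii) is immediate from the definitions, since mutual nondisturbance is the conjunction of the two one-way nondisturbance relations and nondisturbance is their disjunction.

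It remains to prove (iii)$\Rightarrow$(i), which is the only step using $\dim\hi=2$. Without loss of generality suppose $\A$ can be measured without disturbing $\B$, realized by an instrument $\I$. Since $\tr{\I_\Omega(\varrho)}=\tr{\varrho\,\A(\Omega_\A)}=\tr{\varrho}$, the map $\I_\Omega$ is a channel, so its dual $\Phi:=\I_\Omega^*$ is unital, and the nondisturbance condition \eqref{eq:i-nond} is equivalent to the fixed-point equations $\Phi(\B(y))=\B(y)$ for all $y$. Thus the problem reduces to the following purely structural claim about qubits: if a unital channel $\Phi$ that is the total operation of an instrument reproducing $\A$ fixes every effect $\B(y)$, then $\A(x)\B(y)=\B(y)\A(x)$ for all $x,y$. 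This claim is exactly the content of \cite[Prop. 6]{HeWo10}, and invoking it finishes the proof: commutativity is symmetric, so $\A$ and $\B$ are mutually commuting.

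To see why the qubit claim should hold, I would argue in the Bloch picture. If every $\B(y)$ is a multiple of $\id$ then $\B$ commutes with everything and we are done, so assume some $\B(y_0)=\beta\,\id+\vec b\cdot\vec\sigma$ with $\vec b\neq 0$. Representing $\Phi$ on self-adjoint operators $x_0\,\id+\vec x\cdot\vec\sigma$ by a real contraction $T$ via $\vec x\mapsto T\vec x$, the fixed-point equations become $T\vec b_y=\vec b_y$ for all $y$, where $\B(y)=\beta_y\,\id+\vec b_y\cdot\vec\sigma$. One then has to show that the combination of this rigidity of $T$ with the reproduction constraint $\A(x)=\Lambda^*(\A'(x)\otimes\id)$ from Prop. \ref{prop:nondist} forces each $\A(x)$ to be spectrally aligned with the fixed Bloch directions, hence to commute with every $\B(y)$.

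The main obstacle is precisely this last alignment step. A unital qubit channel can fix a nonzero Bloch vector while still acting nontrivially (for instance by rotating or contracting the orthogonal directions), so the fixed-point condition alone does not visibly pin down $\A$; one must genuinely exploit that in two dimensions the non-selective operation reproducing $\A$ and fixing $\B$ is highly constrained. This is exactly where the two-dimensional Bloch geometry is indispensable: the analogue of (iii)$\Rightarrow$(i) fails in higher dimensions, where a measurement can be nondisturbing without commuting, so no purely general operator-algebraic fixed-point argument can suffice and the qubit-specific estimate of \cite[Prop. 6]{HeWo10} is needed.
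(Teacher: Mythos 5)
Your proposal is correct and takes essentially the same route as the paper: the paper's entire proof of this proposition consists of observing that the easy implications follow from the general discussion (L\"uders instruments for mutually commuting observables, and the trivial implication from mutual nondisturbance to nondisturbance) and then invoking \cite[Prop. 6]{HeWo10} for the hard direction (iii)$\Rightarrow$(i), exactly as you do. Your concluding Bloch-picture discussion is openly heuristic and does not close that step by itself, but since the cited result carries the weight in both your argument and the paper's, this is not a gap.
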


Using Prop. \ref{prop:commu} and Prop. \ref{prop:qubit-1} we get a complete characterization of broadcastable pairs of qubit observables.

\begin{proposition} \label{prop:qubit-2}
Two qubit observables $\A$ and $\B$ are broadcastable if and only if $\A$ and $\B$ are commutative and mutually commuting.  
\end{proposition}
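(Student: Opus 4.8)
The plan is to prove the two implications separately, leaning entirely on the hierarchy of relations together with the two preceding propositions; no new calculation is really needed. The sufficiency direction is immediate: if $\A$ and $\B$ are commutative and mutually commuting, then since $\dim\hi = 2 < \infty$ I would apply Prop.~\ref{prop:commu} verbatim to conclude that the pair is broadcastable. This direction uses nothing special to qubits.

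For necessity I would exploit the observation recorded in Section~\ref{sec:versus}: broadcastability of a pair implies that each member is broadcastable with itself, since the single channel $\Lambda$ that broadcasts both $\A$ and $\B$ witnesses in particular the broadcastability of the singleton $\{\A\}$, hence of the degenerate pair $(\A,\A)$. I would then descend the hierarchy twice. First, applying the chain broadcastable $\Rightarrow$ one-side broadcastable $\Rightarrow$ mutually nondisturbing to the pair $(\A,\A)$ and invoking the implication (ii)$\Rightarrow$(i) of Prop.~\ref{prop:qubit-1} with $\B$ replaced by $\A$, I obtain $\A(x)\A(x') = \A(x')\A(x)$ for all $x,x'\in\Omega_\A$, i.e.\ that $\A$ is commutative; the identical argument applied to $(\B,\B)$ shows $\B$ is commutative. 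Second, applying the same hierarchy chain to the original pair $(\A,\B)$ and again using Prop.~\ref{prop:qubit-1} yields that $\A$ and $\B$ are mutually commuting. Together these give that a broadcastable qubit pair is commutative and mutually commuting, completing the equivalence.

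I do not anticipate a genuine obstacle, as all the real content is supplied by Prop.~\ref{prop:qubit-1} (which itself rests on \cite[Prop.~6]{HeWo10}) and Prop.~\ref{prop:commu}. The one point demanding care is the legitimacy of applying Prop.~\ref{prop:qubit-1} to a pair in which the two observables coincide; I would make explicit that nothing in the statement of that proposition forbids $\B=\A$, so that mutual nondisturbance of $\A$ with itself is equivalent to self-commutativity of $\A$. A secondary point worth stating clearly is that the two uses of the hierarchy are independent and both necessary: self-broadcastability delivers commutativity of each observable, whereas pairwise broadcastability delivers mutual commutativity, and neither conclusion alone suffices for the claimed characterization.
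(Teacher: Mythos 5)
Your proof is correct and takes essentially the same route as the paper's: sufficiency via Prop.~\ref{prop:commu}, and necessity by descending the hierarchy to a nondisturbance relation for the pair $(\A,\B)$ as well as for the degenerate pairs $(\A,\A)$ and $(\B,\B)$, then invoking Prop.~\ref{prop:qubit-1}. The only cosmetic difference is that the paper descends to the nondisturbing relation (iii) where you use mutual nondisturbance (ii), which is immaterial given the equivalences in Prop.~\ref{prop:qubit-1}; your explicit remark that nothing forbids taking $\B=\A$ in that proposition is exactly the point the paper uses implicitly.
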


\begin{proof}
The 'if' part is a direct consequence of Prop. \ref{prop:commu}.
To show the 'only if' part, we assume that $\A$ and $\B$ are broadcastable.
Then $\A$ and $\B$ are also mutually nondisturbing, hence by Prop. \ref{prop:qubit-1} mutually commuting.
Further, since $\A$ and $\B$ are broadcastable,  $\A$ is broadcastable with itself.
By the hierarchy of the relations this implies that $\A$ is nondisturbing  with itself, hence using again Prop. \ref{prop:qubit-1} we conclude that $\A$ is commutative.
In a similar way we conclude that $\B$ is commutative.
\end{proof}

Further, utilizing the hierarchy of relations and the previous results, we can also characterize the one-side broadcastability of qubit observables.
The following statement extends Prop. \ref{prop:qubit-1}.

\begin{proposition}\label{prop:qubit-final}
For two qubit observables $\A$ and $\B$, the following are equivalent:
\begin{itemize}
	\item[(i)] $\A$ and $\B$ are one-side broadcastable.
	\item[(ii)] $\A$ and $\B$ are mutually nondisturbing.
	\item[(iii)] $\A$ and $\B$ are nondisturbing.
	\item[(iv)] $\A$ and $\B$ are mutually commuting.
\end{itemize}

If one (and hence all) of these relations holds and neither $\A$ nor $\B$ is trivial, then $\A$ and $\B$ are broadcastable.
\end{proposition}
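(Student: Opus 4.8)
The plan is to close the chain of implications around the equivalences already in hand. The hierarchy of relations depicted in Fig.~\ref{fig:hierarchy} immediately gives (i)$\Rightarrow$(ii)$\Rightarrow$(iii), and Prop.~\ref{prop:qubit-1} supplies (ii)$\Leftrightarrow$(iii)$\Leftrightarrow$(iv), since condition (iv) here is exactly the mutual commutativity appearing as item (i) of Prop.~\ref{prop:qubit-1}. Hence the only missing link needed to make (i)--(iv) equivalent is (iv)$\Rightarrow$(i), and I would prove this together with the final broadcastability claim.

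The heart of the argument is the following structural fact: if $\A$ and $\B$ are mutually commuting qubit observables and neither is trivial, then they are diagonal in a common orthonormal basis. I would establish this in the Bloch representation, writing each effect as $\A(x)=a_0(x)\id+\va(x)\cdot\vec\sigma$ and $\B(y)=b_0(y)\id+\vec b(y)\cdot\vec\sigma$, and using the identity $[\A(x),\B(y)]=2i\,(\va(x)\times\vec b(y))\cdot\vec\sigma$. Since $\B$ is not trivial, some $\vec b(y_0)\neq 0$; mutual commutativity then forces $\va(x)\times\vec b(y_0)=0$, so every $\va(x)$ is a scalar multiple of the fixed direction $\vec b(y_0)$ (possibly zero). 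In particular all the $\A(x)$ commute and are diagonal in the eigenbasis of $\vec b(y_0)\cdot\vec\sigma$. Because $\A$ is not trivial, some $\va(x_0)\neq 0$, and it lies along $\vec b(y_0)$; commuting each $\B(y)$ with $\A(x_0)$ then forces every $\vec b(y)$ to be parallel to the same direction, so all of $\B$ is diagonal in that same basis as well. Applying Prop.~\ref{prop:diagonal} to $\mathcal A=\{\A,\B\}$ yields that $\A$ and $\B$ are broadcastable. This is precisely the final statement of the proposition, and broadcastability implies one-side broadcastability, giving (iv)$\Rightarrow$(i) in the non-trivial case.

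It remains to cover the case where at least one of $\A$ and $\B$ is trivial. A trivial observable is a multiple of the identity, so it commutes with everything and (iv) holds automatically; moreover, the observation in Section~\ref{sec:versus} shows that every observable is one-side broadcastable with any trivial observable, via the channel $\Lambda(\varrho)=\varrho\otimes\eta$. Note that this fallback is genuinely needed, since a trivial $\A$ paired with a non-commutative $\B$ need not be simultaneously diagonalizable, which is exactly why the final claim must exclude trivial observables. This establishes (i) directly and completes (iv)$\Rightarrow$(i), closing the equivalence of (i)--(iv).

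I expect the only real obstacle to be the Bloch-geometry step, and specifically the care needed to verify that mutual commutativity with a single non-trivial effect pins down the directions of all effects of the other observable, and hence their simultaneous diagonalizability. Once that is in place, the conclusion follows from the already-proven Prop.~\ref{prop:diagonal}, and the remaining implications are pure bookkeeping through the hierarchy and Prop.~\ref{prop:qubit-1}.
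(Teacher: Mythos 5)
Your proof is correct and takes essentially the same approach as the paper: both reduce everything to (iv)$\Rightarrow$(i) via the hierarchy and Prop.~\ref{prop:qubit-1}, split into the case that both observables are non-trivial versus the case that one is trivial, handle the first case by joint diagonalization and broadcastability (Prop.~\ref{prop:diagonal}, Prop.~\ref{prop:qubit-2}), and handle the second by the trivial-observable one-side broadcastability observation of Sec.~\ref{sec:versus}. The only difference is technical: the paper establishes the key dichotomy (mutually commuting qubit observables are jointly diagonalizable unless one is trivial) by a spectral argument---a qubit effect either has nondegenerate spectrum or is a multiple of the identity---whereas you verify the same fact with Bloch-vector cross products.
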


\begin{proof}
By the general hierarchy of the relations we have (i)$\Rightarrow$(ii)$\Rightarrow$(iii), and by Prop. \ref{prop:qubit-1} we have (iii)$\Leftrightarrow$(iv). It is thus enough to show that (iv)$\Rightarrow$(i).
Let $\A$ and $\B$ be mutually commuting qubit observables.
Then at least one of the following holds:
  \begin{itemize}
	\item[(a)] $\A$ and $\B$ are both commutative.
	\item[(b)] $\A$ is a trivial observable.
	\item[(c)] $\B$ is a trivial observable.
\end{itemize}
To see this, let us first note that a selfadjoint operator on a two-dimensional Hilbert space either has nondegenerate spectrum or is a multiple of the identity operator.
Now, assume that the observable $\A$ is not commutative, and let $\A(x)$ and $\A(x')$ be two noncommuting operators.
Since $\B(y)$ commutes with both $\A(x)$ and $\A(x')$, it is diagonal in the eigenbases of $\A(x)$ and $\A(x')$.
It follows that $\B(y)$ is a multiple of the identity operator.
Therefore, if $\A$ is not commutative, then $\B$ is trivial, and vice versa.

The one-side broadcastability of $\A$ and $\B$ follows in all cases (a)--(c).
If (a) holds, then by Prop. \ref{prop:qubit-2} the pair is broadcastable, hence one-side broadcastable.
If (b) or (c) holds, then the pair is one-side broadcastable since we seen in Sec. \ref{sec:versus} that every observable is one-side broadcastable with any trivial observable.

The last claim follows from the division into the cases (a)--(c) and Prop. \ref{prop:qubit-2}.
\end{proof}

We recall that two qubit observables can be compatible even if they are not mutually commuting.
For instance, the compatibility relation for the pairs of two-outcome qubit observables has been characterized in \cite{StReHe08},\cite{BuSc10},\cite{YuLiLiOH10}, and it is easy to see that most of the compatible pairs are not mutually commuting.

\section{Discussion}

The set of bipartite states divides into separable states and entangled states.
Among all separable states, some states are more classical than others.
Especially, the set of zero discord states is a proper subset of separable states, and separable states with nonzero discord yield advantage over zero discord states in certain tasks like phase estimation \cite{Girolamietal14}.

A comparable partitioning on pairs of observables is the division into compatible pairs and incompatible pairs, and then compatible pairs further into  subsets of broadcastable, one-side broadcastable, nondisturbing and mutually nondisturbing pairs.
It would be interesting to see if their complement relations have a similar kind of task oriented characterizations as incompatibility, in which case a pair is incompatible if and only if it enables steering \cite{UoBuGuPe15}.


\begin{thebibliography}{10}

\bibitem{PSAQT82}
A.S. Holevo.
\newblock {\em Probabilistic and Statistical Aspects of Quantum Theory}.
\newblock North-Holland Publishing Co., Amsterdam, 1982.

\bibitem{OQP97}
P.~Busch, M.~Grabowski, and P.J. Lahti.
\newblock {\em Operational Quantum Physics}.
\newblock Springer-Verlag, Berlin, 1997.
\newblock second corrected printing.

\bibitem{Busch87}
P.~Busch.
\newblock Some realizable joint measurements of complementary observables.
\newblock {\em Found. Phys.}, 17:905--937, 1987.

\bibitem{LaPu97}
P.~Lahti and S.~Pulmannov{\'a}.
\newblock Coexistent observables and effects in quantum mechanics.
\newblock {\em Rep. Math. Phys.}, 39:339--351, 1997.

\bibitem{Barnumetal96}
H.~Barnum, C.M. Caves, C.A. Fuchs, R.~Jozsa, and B.~Schumacher.
\newblock Noncommuting mixed states cannot be broadcast.
\newblock {\em Phys. Rev. Lett.}, 76:2818--2821, 1996.

\bibitem{Fanetal14}
H.~Fan, Y.-N. Wang, L.~Jing, J.-D. Yue, H.-D. Shi, Y.-L. Zhang, and L.-Z. Mu.
\newblock Quantum cloning machines and the applications.
\newblock {\em Physics Reports}, 544:241--322, 2014.

\bibitem{FeGaPa06}
A.~Ferraro, M.~Galbiati, and M.G.A. Paris.
\newblock Cloning of observables.
\newblock {\em J. Phys. A: Math. Gen.}, 39:L219, 2006.

\bibitem{FePa07}
A.~Ferraro and M.G.A. Paris.
\newblock Joint measurements on qubits and cloning of observables.
\newblock {\em Open Sys. Information Dyn.}, 14:149--157, 2007.

\bibitem{AlSaLaSo14}
U.~Alvarez-Rodriguez, M.~Sanz, L.Lamata, and E.~Solano.
\newblock Biomimetic cloning of quantum observables.
\newblock {\em Scientific Reports}, 4, 2014.

\bibitem{QTOS76}
E.B. Davies.
\newblock {\em Quantum Theory of Open Systems}.
\newblock Academic Press, London, 1976.

\bibitem{Ozawa84}
M.~Ozawa.
\newblock Quantum measuring processes of continuous observables.
\newblock {\em J. Math. Phys.}, 25:79--87, 1984.

\bibitem{HeWo10}
T.~Heinosaari and M.M. Wolf.
\newblock Nondisturbing quantum measurements.
\newblock {\em J. Math. Phys.}, 51:092201, 2010.

\bibitem{BuSi98}
P.~Busch and J.~Singh.
\newblock L{\"u}ders theorem for unsharp quantum measurements.
\newblock {\em Phys. Lett. A}, 249:10--12, 1998.

\bibitem{Busch09}
P.~Busch.
\newblock ``{N}o {I}nformation {Wi}thout {D}isturbance": {Q}uantum
  {L}imitations of {M}easurement.
\newblock In J.~Christian and W.~Myrvold, editors, {\em {Q}uantum {R}eality,
  {R}elativistic {C}ausality, and {C}losing the {E}pistemic {C}ircle}.
  Springer-Verlag, 2009.

\bibitem{BuLa89}
P.~Busch and P.~Lahti.
\newblock The determination of the past and the future of a physical system in
  quantum mechanics.
\newblock {\em Found. Phys.}, 19:633--678, 1989.

\bibitem{SiSt92}
M.~Singer and W.~Stulpe.
\newblock Phase-space representations of general statistical physical theories.
\newblock {\em J. Math. Phys.}, 33:131--142, 1992.

\bibitem{KiLaScWe12}
J.~Kiukas, P.~Lahti, J.~Schultz, and R.F. Werner.
\newblock Characterization of informational completeness for covariant phase
  space observables.
\newblock {\em J. Math. Phys.}, 53:102103, 2012.

\bibitem{BuHeScSt13}
P.~Busch, T.~Heinosaari, J.~Schultz, and N.~Stevens.
\newblock Comparing the degrees of incompatibility inherent in probabilistic
  physical theories.
\newblock {\em EPL}, 103:10002, 2013.

\bibitem{HeMiZi16}
T.~Heinosaari, T.~Miyadera, and M.~Ziman.
\newblock An invitation to quantum incompatibility.
\newblock {\em J. Phys. A: Math. Theor.}, 49:123001, 2016.

\bibitem{StReHe08}
P.~Stano, D.~Reitzner, and T.~Heinosaari.
\newblock Coexistence of qubit effects.
\newblock {\em Phys. Rev. A}, 78:012315, 2008.

\bibitem{BuSc10}
P.~Busch and H.-J. Schmidt.
\newblock Coexistence of qubit effects.
\newblock {\em Quantum Inf. Process.}, 9:143--169, 2010.

\bibitem{YuLiLiOH10}
S.~Yu, N.-L. Liu, L.~Li, and C.H. Oh.
\newblock Joint measurement of two unsharp observables of a qubit.
\newblock {\em Phys. Rev. A}, 81:062116, 2010.

\bibitem{Girolamietal14}
D.~Girolami, A.M. Souza, V.~Giovannetti, T.~Tufarelli, J.G. Filgueiras, R.S.
  Sarthour, D.O. Soares-Pinto, I.S. Oliveira, and G.~Adesso.
\newblock Quantum discord determines the interferometric power of quantum
  states.
\newblock {\em Phys. Rev. Lett.}, 112:210401, 2014.

\bibitem{UoBuGuPe15}
R.~Uola, C.~Budroni, O.~G\"uhne, and J.-P. Pellonp\"a\"a.
\newblock One-to-one mapping between steering and joint measurability problems.
\newblock {\em Phys. Rev. Lett.}, 115:230402, 2015.

\end{thebibliography}
\end{document}